\documentclass[letterpaper, 10 pt, conference]{ieeeconf}
\usepackage[utf8]{inputenc}

\usepackage{amssymb,amsmath,amsthm}
\usepackage[dvipsnames]{xcolor}
\usepackage{soul}
\usepackage{graphicx}
\usepackage{caption}
\usepackage{subcaption}
\usepackage[left=2cm,right=2cm,top=2cm,bottom=2cm]{geometry}
\usepackage{xcolor}


\newtheorem*{proposition*}{Proposition}
\newtheorem*{theorem*}{Theorem}
\newtheorem{assumption}{Assumption}

\newtheorem{theorem}{Theorem}
\newtheorem{definition}{Definition}
\newtheorem{remark}{Remark}

\newcommand{\EE}{\mathbb{E}}
\newcommand{\Fs}{\mathcal{F}}

\newcommand{\Ns}{\mathcal{N}}
\newcommand{\Qs}{\mathcal{Q}}

\newcommand{\Gs}{\mathcal{G}}

\newcommand{\Vs}{\mathcal{V}}

\newcommand{\RR}{\mathbb{R}}

\DeclareMathOperator*{\argmin}{argmin}

\title{\LARGE \bf Adversarial Linear-Quadratic Mean-Field Games over Multigraphs} 

\author{Muhammad~Aneeq~uz~Zaman, Sujay~Bhatt, and Tamer~Ba{\c s}ar
	\thanks{First and third authors are affiliated with the Coordinated Science Laboratory, University of Illinois at Urbana–Champaign Urbana, IL 61801. Emails:  mazaman2@illinois.edu, sujaybhatt.hr@gmail.com, basar1@illinois.edu}
	\thanks{Research of first and third authors supported in part by an AFOSR Grant (FA9550-19-1-0353), and in part by an ARO MURI Grant (AG285).}
}




\IEEEoverridecommandlockouts
\begin{document}
	\maketitle
	
	\begin{abstract}
		In this paper, we propose a game between an exogenous adversary and a network of agents connected via a multigraph. The multigraph is composed of (1) a global graph structure, capturing the virtual interactions among the agents, and (2) a local graph structure, capturing physical/local interactions among the agents. The aim of each agent is to achieve consensus with the other agents in a decentralized manner by minimizing a local cost associated with its local graph and a global cost associated with the global graph. The exogenous adversary, on the other hand, aims to maximize the average cost incurred by all agents in the multigraph. We derive Nash equilibrium policies for the agents and the adversary in the Mean-Field Game setting,  when the agent population in the global graph is arbitrarily large and the ``homogeneous mixing" hypothesis holds on local graphs. This equilibrium is shown to be unique and the equilibrium Markov policies for each agent depend on the local state of the agent, as well as the influences on the agent by the local and global mean fields.

	\end{abstract}
	
	\section{Introduction}

Mean-field games (MFGs) model large scale strategic interactions in a network of individual rational entities (a.k.a agents), where each agent tries to optimize its individual objective function. Because of the presence of a very large number of agents in a MFG setting, the strategic interaction is such that the impact of any one agent on the other agents is negligible; however, the overall effect of the network on each agent cannot be ignored. There is rich literature on MFGs in both continuous time and discrete-time (\cite{huang2006large,lasry2007mean,moon2014discrete}), with applications in diverse fields, ranging from cybersecurity to inter-bank borrowing/lending (\cite{uz2020secure, carmona2013mean}).

In this paper, we introduce Adversarial MFGs (A-MFGs) as a class of MFGs where an exogenous adversary and a network of agents strategically interact to realize self-objectives. The objective of each agent is to achieve consensus with other agents in a decentralized manner, while the objective of the exogenous adversary is to adversely affect the cumulative objectives of the agents in the network. One area of application of A-MFGs is social networks, and particularly study of evolution of opinions in adversarial settings.

Heretofore, the influence of the structure of the network on the mean-field interactions has not been  extensively studied in the literature. In this paper, we consider MFGs on \textit{multigraphs} with an exogenous adversary, where we explicitly include the effect of the network on the multi-agent interactions. Multigraphs (\cite{gjoka2011multigraph,shafie2015multigraph}) are graphs representing networks where multiple edges standing for distinct relationships between the agents are permitted. As a representative example, see Fig. \ref{fig:multigraph} where the dark edges represent physical/local interactions and light edges represent virtual/social interactions.

Specifically, we consider linear dynamics for the agents and incorporate the effect of interaction over the multigraph edges via a local cost associated with its local neighborhood and the global cost associated with the collective behavior of the network. The exogenous adversary is able to influence the agents through the agents' dynamics and cost functions, and aims to maximize the average cost over all agents. By considering the MFG (limiting case where the number of agents tends to infinity), it is possible to propose approximate-Nash equilibrium strategies for the agents and the adversary in the finite population game. 

\begin{figure}[htbp]
	\centering
	\includegraphics[scale=.25]{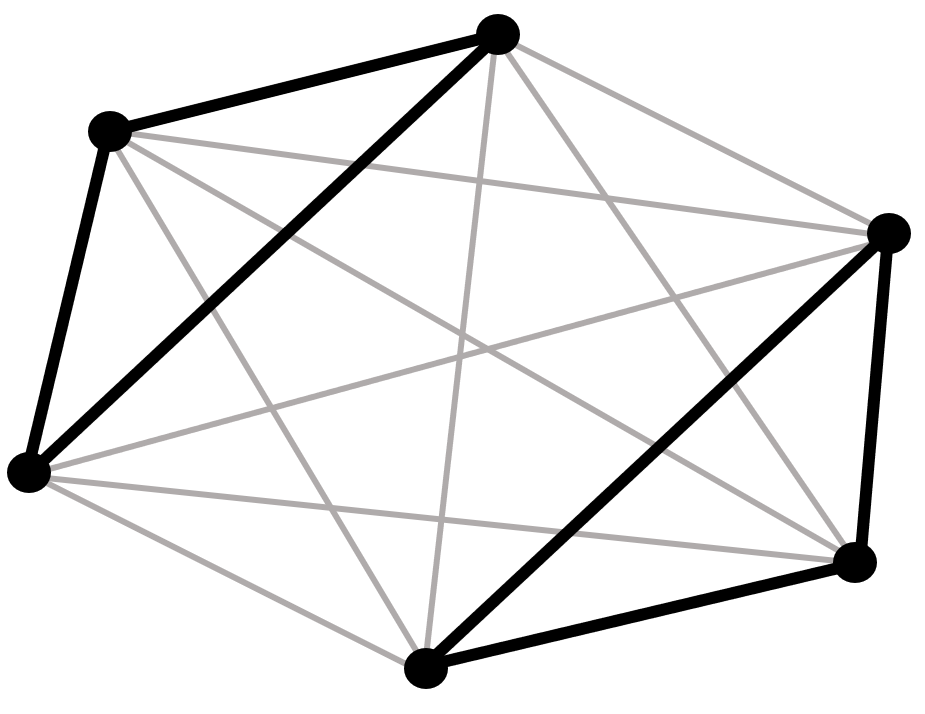}
	\caption{\small The multigraph is composed of two underlying graphs. The global graph (light edges) represents the social/virtual interactions among agents and strongly connects all the agents in the multigraph. The local graph (dark edges) represents physical/inter-personal interactions and connects an agent to its neighborhood.}
	\label{fig:multigraph}
\end{figure}

\subsection{Motivating example of adversarial MFGs on multigraphs} \label{subsec:mot_exmpl}
The framework and results of this paper can be used to analyze the evolution of opinion dynamics on multigraphs. Consider a multigraph in which the dark edges in Fig. \ref{fig:multigraph} provide a backbone for the opinion evolution via word-of-mouth and the light edges act as a backbone for the opinion evolution via virtual/social media. While with global interactions (light edges) consensus can be reached, with local interactions (dark edges) only clusters of consensus opinions may emerge \cite{shrimplin2011contradictions}. However, we would be remiss not to consider the role of exogenous media houses acting to manipulate the opinion evolution \cite{baum2008relationships}. 

\subsection{Main results \& organization}
\noindent The contributions of this paper are as follows:
\begin{enumerate}
\item We formalize the Adversarial MFG (A-MFG) over multigraphs to model the evolution of opinion dynamics over a network of agents connected through virtual/global and physical/local networks
\item We establish the existence and uniqueness of the  mean-field equilibrium for Linear Quadratic Mean Field Games (LQ-MFGs) on multigraphs. Furthermore, we show that the local and global effects display some useful linearity properties, which can be leveraged to obtain equilibrium control policies for the agents and the adversary. This extends the results available in the literature by explicitly considering the network structure in the MFG framework.
\item We show that, for an A-MFG on a multigraph, the equilibrium strategy for a generic agent depends linearly on the local state of the agent, the local effect (effect of the neighborhood on the agent) and the global effect (aggregate effect of all the agents).
\end{enumerate}

\subsection{Related literature}
Mean-field games (MFGs) were introduced in the seminal works of \cite{huang2006large,huang2003individual} and, independently, \cite{lasry2007mean}, to address the issue of scalability in dynamic games. Rooted in the original MFG formulation, LQ-MFGs have been studied in the continuous-time setting (\cite{huang2007large,bensoussan2016linear,huang2018linear,bardi2012explicit}) as well as the discrete-time setting (\cite{moon2014discrete,uz2020approximate,fu2019actor}). There is also growing interest in using Reinforcement Learning for MFGs in the setting where model parameters are unknown to the agents (\cite{subramanian2019reinforcement,fu2019actor,uz2020reinforcement}). 

MFGs with a network structure were first considered in \cite{delarue2017mean}, where the underlying graph was of the Erd\"{o}s–R\'{e}nyi type. Recently there has been a number of papers on Graphon MFGs (\cite{caines2018graphon,gao2020lqg,caines2019graphon}) which utilizes graphon theory to model infinite limits of large networks. Our work here appears to be the first to investigate the local/physical vs global/virtual graph dichotomy which emerges in social networks. Similar to our setting, literature on MFGs with dominating \cite{bensoussan2016mean} or major agents (\cite{lasry2018mean,ma2020linear,huang2010large}) considers one agent which has a significant effect on the other players. On the contrary the significant player (adversary) explicitly aims is to minimize the average objective of the other agents and has no idiosyncratic noise in this work. Works on risk-sensitive and robust MFGs (\cite{tembine2013risk,bauso2016robust,moon2016linear,bauso2016opinion}) bear semblance to this work as they consider the presence of an adversarial player through risk-sensitivity or robustness considerations but differ on account of the multigraph network structure of this work. 

The paper is organized as follows. Section \ref{sec:formulation} formulates the A-MFG over multigraphs, and discusses the relevant solution concepts in finite population and infinite population settings. Section \ref{sec:mfe_char} deals with characterization of the equilibrium of the game in the infinite population setting. Section \ref{sec:conc} provides concluding remarks and describes future directions of research. The proofs are included in the appendix.

	\section{Problem Formulation} \label{sec:formulation}


In this section we first introduce the $N$-agent adversarial game over multigraphs. The game is composed of an exogenous adversary strategically interacting with $N$ rational agents connected with each other through a multigraph. Due to the decentralized information structure solving the finite population game becomes difficult. So we formulate the A-MFG over multigraphs, where the exogenous adversary interacts with infinitely many players connected via a multigraph.

\subsection{Adversarial LQ game over multigraphs}
We propose a nonzero-sum Linear Quadratic game, where an exogenous adversary interacts with $N$ agents and all the agents are connected through an underlying multigraph structure. The agents are connected with each other through two distinct graphs (global and local), hence the name multigraph. The global graph structure represents strategic interactions of a global nature, like virtual/social media interactions, as shown by the gray edged graph in~Fig.~\ref{fig:multigraph}. The local graph structure represents the physical/real-world interactions among subsets of agents, like neighborhoods or friendships as shown by the dark edged graphs in Fig.~\ref{fig:multigraph}. A set of agents connected by the local graph (dark) is called a \textit{neighborhood}. In our formulation, neighborhoods do not overlap, and hence each agent $i$ belongs to one and only one neighborhood, which is denoted by $\Ns(i)$. 

The state of each agent $i$ at time $t$ is denoted by $Z^i_t \in \RR^z$ and its control effort by $U^i_t \in \RR^u$. The adversary's control is denoted by $V_t \in \RR^v$, through which it can affect the dynamics and the cost functions of the agents in the multigraph. Each agent (say $i$) has linear dynamics, where the agent's control $U^i_t$ and the adversary's control $V_t$ enter linearly, as given below.
\begin{align} \label{eq:fin_pop_dyn}
Z_{t+1}^i & = A Z_t^i +B U_t^i + C V_t + W_t^i, t \in (0,\ldots,T-1)
\end{align} 
where $W_t^i \sim \Gs(0,\Sigma_w)$ denotes an i.i.d. Gaussian process noise of agent $i$ and $Z^i_0 \sim \Gs(\mu_0, \Sigma_0)$ is the Gaussian distributed initial state of agent $i$. The initial states $Z^i_0$ and the noise processes $W_t^i$ of all the agents are assumed to be independent of each other. The aggregate state of a neighborhood $\Ns(i)$ at time $t$ is the denoted by $Y_t^i$ where 
\begin{align} \label{eq:fin_pop_dyn_net}
Y_t^i = \sum_{j \in \mathcal{N}(i)} Z_t^j / \lvert \mathcal{N}(i) \rvert 
\end{align} 
In this work, we assume \emph{homogeneous mixing} \cite{turnes2014epidemic} of the local graph, which says that the neighbors of an agent $i$ can be chosen independently from the set of all $N$ agents. This will be expressed precisely for the mean-field setting where the number of agents is assumed to be large. The individual cost function each agent wants to minimize is:
\begin{align} \label{eq:fin_pop_cost_net}
&J^i_N (\pi^i,\pi^{-i},V) = \sum_{t=0}^{T-1} \EE\bigg[ \big\lVert Z_t^i \big\rVert^2_{Q_t}  + \bigg\lVert Z_t^i- \frac{1}{N} \sum_{j = 1}^{N}Z_t^j \bigg\rVert^2_{\bar{Q}_t} \nonumber \\
& + \big\lVert Z_t^i- Y_t^i \big\rVert^2_{\tilde{Q}_t} + \big\lVert U_t^i \big\rVert^2_{R_t} - \big\lVert V_t \big\rVert^2_{S_t} \bigg] + \EE \bigg[ \big\lVert Z_T^i \big\rVert^2_{Q_T} \nonumber \\
&  + \bigg\lVert Z_T^i- \frac{1}{N} \sum_{j = 1}^{N}Z_T^j \bigg\rVert^2_{\bar{Q}_T} + \big\lVert Z_T^i- Y_T^i \big\rVert^2_{\tilde{Q}_T} \bigg] .
\end{align}

Each agent $i$ at time $t$ is assumed to have access to only its local history $Z^i_{[0,t]} = (Z^i_0,\dots,Z^i_t)$. 
For agent $i$, the set of all such policies $\pi^i := (\pi^i_0, \ldots, \pi^i_{T-1})$ is denoted by $\Pi^i$. The agent aims to find the control policy $\pi^i$ which minimizes its cost \eqref{eq:fin_pop_cost_net}.

Note that the agent's cost function, as in \eqref{eq:fin_pop_cost_net}, is composed of a local term $\big\lVert Z_t^i- Y_t^i \big\rVert^2_{\tilde{Q}_t}$ due to the interaction in the local graph and a global term $  \big\lVert Z_t^i- \frac{1}{N} \sum_{j = 1}^{N}Z_t^j \big\rVert^2_{\bar{Q}_t}$ due to the interaction in the global graph. These terms are consensus type terms and penalize the deviation of agent $i$ from the local behavior and global aggregate behavior, respectively. Hence, the agent aims to align itself with local as well as global consensus. The cost function also includes terms that penalize large values of the state of the agent and its control over the horizon, as well as (negatively) the control effort of the adversary. The weighting matrices $Q_t,\bar{Q}_t,\tilde{Q}_t,S_t$ and $R_t$ in \eqref{eq:fin_pop_cost_net} are symmetric matrices such that $Q_t,\bar{Q}_t, \tilde{Q}_t \geq 0, S_t,R_t > 0$ for $t \in [T]$.

The adversary is exogenous to the multigraph and can affect all the agents through their dynamics \eqref{eq:fin_pop_dyn} and cost functions \eqref{eq:fin_pop_cost_net}. The adversary uses its influence over the multigraph to disrupt consensus between agents by maximizing the average cost of all the agents, that is minimize
\begin{align} \label{eq:adv_cost}
J^0_N (V, \pi^{(N)}) = -\frac{1}{N} \sum_{i=1}^{N} J^i_N (\pi^i,\pi^{-i},V)
\end{align}
where $\pi^{(N)} = (\pi^1,\ldots,\pi^N)$ is the joint policy. The adversary at time $t$, is assumed to have access to the history of states of all the agents up to time $t$, $Z_{[0,t]}$. The set of all possible adversary policies under this information structure, is denoted by $\Vs$.

The solution concept used for this non-cooperative setting is that of Nash equilibrium which is the set of policies such that if all the agents and the adversary follow these policies, then none of the agents or the adversary have any incentive to unilaterally deviate. This is formally introduced as follows.
\begin{definition}
	The set of agent policies $\pi^* = (\pi^{1*},\ldots,\pi^{*N})$ and adversary policy $V^* = (V^*_0, \ldots, V^*_{T-1})$ constitute a Nash equilibrium if, for $i = 1,\ldots,N$,
	\begin{align*}
	J^i_N(\pi^{i*},\pi^{-i*},V^*) &  \leq J^i_N(\pi^{i},\pi^{-i*},V^*), \hspace{0.2cm} \pi^i \in \Pi^i,  \\
	J^0_N(\pi^*, V^*) &  \leq J^0_N(\pi^*, V), \hspace{0.2cm}V \in \Vs
	\end{align*}
\end{definition}
Due to the local decentralized information structure, computing Nash equilibria in the finite population setting might prove to be difficult \cite{bacsar1998dynamic}, so we turn to use of the MFG setting where $N \rightarrow \infty$. We call this the Adversarial Mean-Field Game (A-MFG) over a multigraph, which is different from the standard LQ-MFGs in two distinct ways: 1) there is an adversary which aims to manipulate the agents to maximize their average cost, and 2) each agent is also affected by a local interaction term, in addition to the global interaction term, due to the presence of local and global graphs in the multigraph. We assume the local effect follows the homogeneous mixing hypothesis, as to be further clarified in the following subsection.
\subsection{Adversarial MFG over multigraphs}
In the mean-field setting of this game, we consider a generic (representative) agent interacting with infinitely many agents, where the global aggregate is denoted by $\bar{Z}_t$ and the local aggregate is denoted by $Y_t$. The total number of agents $N \rightarrow \infty$, but the number of agents in a neighborhood may still be finite. The dynamics of the generic agent in the MFG is given by
\begin{align} \label{eq:gen_agent_dyn}
Z_{t+1} & = A Z_t +B U_t + C V_t + W_t.
\end{align}
We dropped the superscript for simplicity. The cost function the generic agent aims to minimize is given by
\begin{align} \label{eq:gen_agent_cost}
& J(\mu,\bar{Z},V) =  \sum_{t=0}^{T-1} \EE\big[ \big\lVert Z_t \big\rVert^2_{Q_t} \hspace{-0.1cm} + \big\lVert Z_t- Y_t \big\rVert^2_{\tilde{Q}_t} \hspace{-0.1cm} + \big\lVert Z_t- \bar{Z}_t \big\rVert^2_{\bar{Q}_t} \nonumber \\
& + \big\lVert U_t \big\rVert^2_{R_t} - \big\lVert V_t \big\rVert^2_{S_t} \big] + \EE \big[ \big\lVert Z_T \big\rVert^2_{Q_T} + \lVert Z_T- \bar{Z}_T \rVert^2_{\bar{Q}_T} \nonumber \\
& + \big\lVert Z_T- Y_T \big\rVert^2_{\tilde{Q}_T} \big]
\end{align}
where $\bar{Z} = (\bar{Z}_0, \ldots, \bar{Z}_T)$ represents the global aggregate (also called global mean-field), $Y = (Y_0, \ldots, Y_{T})$ represents the local aggregate (local mean-field) and adversary control policy is denoted by $V = (V_0,\ldots,V_T)$. In this setting, the quantities $\bar{Z}$ and $V$ are assumed to be deterministic signals. The control $\mu_t$ of the generic agent at time $t$, is adapted to the filtration generated by the state of the agent and the local mean-field, $\Fs^Z_t$ and $\Fs^Y_T$ respectively. We assume that the multigraph follows the homogeneous mixing property and as a result the local mean-field is assumed to satisfy the following property.
\begin{assumption} {\bf [Homogeneous Mixing]}
	The local mean-field $Y_t$ is assumed to be an exogenous noise process with mean $\EE[Y_t] = \bar{Z}_t$. 
\end{assumption}
\begin{remark}
This assumption is justified by the homogeneous mixing hypothesis \cite{turnes2014epidemic}, which states that an agent's neighbors are randomly distributed among all agents in the game. This hypothesis is quite standard in infection \cite{turnes2014epidemic,del2013mathematical} and opinion \cite{baumgaertner2018spatial} dynamics literature and is being used in mean-field games for the first time to the best of the author's knowledge. We notice that this hypothesis is valid only for large population games.
\end{remark}

The adversary aims to disrupt consensus between the agents, by maximizing the average cost of the generic agent, given the control policy $\mu$ of the generic agent and the mean-field trajectory,
\begin{align*}
J^0 (V,\mu, \bar{Z}) = \lim_{N \rightarrow \infty} J^0_N (V,\mu^{(N)})
\end{align*}
where $\mu^{(N)} = (\mu,\ldots,\mu)$ signifies that all the agents (in the finite population setting) follows the control policy $\mu$. The mean-field equilibrium is an analog of the Nash equilibrium for the MFG and is defined as the $4$-tuple $(\mu,V,\bar{Z},Y)$, where $\mu = (\mu_0,\ldots,\mu_{T-1})$ is the control policy of the generic agent, $V = (V_0,\ldots,V_{T-1})$ is the control policy of the adversary, $\bar{Z}$ is the global mean-field and $Y$ is the local mean-field of the neighborhood of the generic agent. It is formally defined as follows.
\begin{definition}
	The tuple $(\mu^*,V^*,\bar{Z}^*, Y^*)$ is a mean-field equilibrium if {\bf (1) Optimality}:
	\begin{align*}
	\mu^* = \argmin_{\mu} J(\mu,\bar{Z}^*,V^*), \hspace{0.2cm} V^* = \argmin_{V} J^0 (V,\mu^*, \bar{Z}^*)
	\end{align*}
	and {\bf (2) Consistency}: $\bar{Z}^*$ is the aggregate behavior of the infinitely many agents and $Y^*$ is the aggregate behavior of the neighborhood of the generic agent, if the agents follow control policy $\mu^*$ and adversary follows control policy $V^*$.
\end{definition}

The quantities $Y^*$ and $\bar{Z}^*$ are called the \emph{equilibrium} global and local mean-fields (MFs), respectively and the control policies $V^*$ and $\mu^*$ are called the \emph{equilibrium} adversarial and generic agent control policies, respectively. Next we characterize the mean-field equilibrium (MFE) of the A-MFG over multigraphs, by using open-loop analysis. This is accomplished by first developing the maximum principle for stochastic tracking. This maximum principle is then used to obtain the form of optimal control policies for the generic agent and the adversary. Then the equilibrium local and global MFs are characterized under these optimal control policies and the consistency conditions.
	
	\section{MFE Characterization} \label{sec:mfe_char}
	In this section we characterize the MFE of A-MFG over a multigraph. We start by providing the form of equilibrium control policy of the generic agent and the adversary in section \ref{subsec:gen_cont}. This is obtained by developing a stochastic maximum principle (SMP) for tracking a stochastic reference signal. The behavior of the agents and the adversary under equilibrium control policies is aggregated to characterize the equilibrium global MF in section \ref{subsec:gmf_anl_adv_pol}. This also leads to a closed-form expression for adversary's equilibrium policy. Finally we characterize the behavior of the equilibrium local MF and the equilibrium policy of the generic agent. 

\subsection{Equilibrium control policies} \label{subsec:gen_cont}
In this subsection, we obtain the form of equilibrium control policies for the generic agent and the adversary. First we deal with the generic agent's equilibrium control problem for a given deterministic global MF $\bar{Z} = (\bar{Z}_0, \ldots, \bar{Z}_{T-1})$, adversary control policy $V = (V_0, \ldots, V_{T-1})$, and an exogenous stochastic local MF noise process $Y = (Y_0, \ldots, Y_{T-1})$. The control of the generic agent is assumed to be adapted to filtrations generated by its state and the local MF, $\Fs^Z$ and $\Fs^Y$, respectively. 
Recalling the dynamics of the generic agent,
\begin{align}
Z_{t+1} = A Z_t + B U_t + C V_t + W_t, Z_0 \sim \Gs (\mu_0, \Sigma_0) ,
\end{align}
and that the cost function of the generic agent is coupled with the other agents through the global MF $\bar{Z}_t$ and the local MF $Y_t$ in the following manner,
\begin{align*}
& J =  \sum_{t=0}^{T-1} \EE\big[ \big\lVert Z_t \big\rVert^2_{Q_t} \hspace{-0.15cm} + \big\lVert Z_t- Y_t \big\rVert^2_{\tilde{Q}_t} \hspace{-0.15cm} + \big\lVert Z_t- \bar{Z}_t \big\rVert^2_{\bar{Q}_t} \hspace{-0.15cm} + \big\lVert U_t \big\rVert^2_{R_t} \nonumber \\
&  - \big\lVert V_t \big\rVert^2_{S_t} \big] \hspace{-0.1cm} + \hspace{-0.05cm} \EE \big[ \big\lVert Z_T \big\rVert^2_{Q_T} \hspace{-0.1cm} + \lVert Z_T- \bar{Z}_T \rVert^2_{\bar{Q}_T} \hspace{-0.1cm} + \big\lVert Z_T- Y_T \big\rVert^2_{\tilde{Q}_T} \big]
\end{align*}
where $\bar Z_t$ and $V_t$ are deterministic signals and $\mu$ is adapted to the state process $Z_t$ and \emph{exogenous} local MF $Y_t$. We obtain the SMP for this problem by using a technique similar to \cite{chau2017discrete}. This maximum principle is novel as it solves the problem of linear quadratic tracking of a stochastic signal.
\begin{theorem} \label{thm:agent_cont}
	The generic agent's equilibrium control adapted to filtration $\Fs^Z_t \vee \Fs^{Y}_t$ is given by
	\begin{align*}
	U^*_t = - R^{-1}_t B^T \zeta_{t+1}
	\end{align*}
	where $\zeta_t$ can be constructed as
	\begin{align*}
	&\zeta_t = A^T \zeta_{t+1} + (Q_t + \bar{Q}_t + \tilde{Q}_t) Z_t - \bar{Q}_t \bar{Z}_t - \tilde{Q}_t Y_t -   M^\zeta_t, \nonumber \\
	& \zeta_T = (Q_T + \bar{Q}_T + \tilde{Q}_T) Z_T - \bar{Q}_T \bar{Z}_T - \tilde{Q}_T Y_T, \\
	& M^\zeta_t = A^T \zeta_{t+1} - A^T \EE \big[\zeta_{t+1} \mid \Fs^Z_t \vee \Fs^{Y}_t \big].
	\end{align*}
	where $  M^\zeta_t$ is a martingale difference sequence adapted to filtration $\Fs^Z_t \vee \Fs^{Y}_t$.
\end{theorem}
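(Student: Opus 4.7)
The plan is to establish Theorem \ref{thm:agent_cont} by a discrete-time stochastic maximum principle argument adapted to the LQ tracking cost, following the scheme of \cite{chau2017discrete}. I would fix the exogenous data $(\bar Z_t, V_t, Y_t)$ and perturb a candidate optimum by $U^\varepsilon_t = U^*_t + \varepsilon\, \delta U_t$ with $\delta U_t$ adapted to $\Fs^Z_t \vee \Fs^Y_t$. Linearity of the dynamics makes the state perturbation $\delta Z_t$ a deterministic function of $\delta U_{0:t-1}$: it satisfies $\delta Z_{t+1} = A\, \delta Z_t + B\, \delta U_t$ with $\delta Z_0 = 0$, so $\delta Z_t$ is $\Fs^Z_{t-1} \vee \Fs^Y_{t-1}$-measurable. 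This adaptedness is what later forces the martingale-difference term in the adjoint recursion to vanish in expectation.

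Next I would compute the G\^{a}teaux derivative of $J$ at $U^*$. Writing $g_t := (Q_t + \bar Q_t + \tilde Q_t) Z_t - \bar Q_t \bar Z_t - \tilde Q_t Y_t$ for $t \le T$, this derivative equals $2 \sum_{t=0}^{T-1} \EE\!\left[ g_t^T \delta Z_t + (U^*_t)^T R_t\, \delta U_t \right] + 2\,\EE[g_T^T \delta Z_T]$. Recognizing the terminal gradient as $\zeta_T$, I would apply Abel summation by parts to get
\[
\EE[\zeta_T^T \delta Z_T] = \sum_{t=0}^{T-1} \EE\!\left[\zeta_{t+1}^T (A \delta Z_t + B \delta U_t) - \zeta_t^T \delta Z_t \right],
\]
and then substitute the backward recursion $\zeta_t = A^T \zeta_{t+1} + g_t - M^\zeta_t$ to reduce each summand to $\EE\!\left[ (M^\zeta_t - g_t)^T \delta Z_t + \zeta_{t+1}^T B\, \delta U_t \right]$.

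The decisive step is the cancellation. Since $\delta Z_t$ is $\Fs^Z_t \vee \Fs^Y_t$-measurable and $\EE[M^\zeta_t \mid \Fs^Z_t \vee \Fs^Y_t] = 0$ by construction, the tower property eliminates $\EE[(M^\zeta_t)^T \delta Z_t]$. The $g_t$ contributions then cancel across the two expressions for the first variation, leaving $2 \sum_{t=0}^{T-1} \EE\!\left[ (R_t U^*_t + B^T \zeta_{t+1})^T \delta U_t \right]$. Convexity of $J$ in $U$ (since $R_t > 0$) makes the first-order condition necessary and sufficient; requiring this inner product to vanish for every $\Fs^Z_t \vee \Fs^Y_t$-adapted $\delta U_t$, and then conditioning on the same filtration, yields $U^*_t = -R^{-1}_t B^T \zeta_{t+1}$ as stated, with the implicit projection onto the filtration absorbed into the martingale-difference parametrization of the adjoint.

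The main obstacle I anticipate is precisely this interplay between the non-adaptedness of $\zeta_{t+1}$ relative to the agent's information set and the goal of stating the optimality condition without explicit conditioning. Writing the adjoint as $\zeta_t = A^T \zeta_{t+1} + g_t - M^\zeta_t$ with $M^\zeta_t$ a martingale difference is the point of the construction: it lets the backward equation be treated as a genuine recursion in $\zeta_t$ alone while preserving the cancellation that drives the variational argument. Well-posedness of this recursion is then routine by reverse induction from $t = T$, since the coefficients are deterministic and the driving signals $Z_t$, $Y_t$, $\bar Z_t$ have finite second moments on $[0,T]$ under the Gaussian initial condition and Gaussian noise.
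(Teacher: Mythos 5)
Your proposal is correct and follows essentially the same route as the paper's proof: a first-variation (Euler) stationarity condition, an adjoint process with a martingale-difference correction, summation by parts plus the tower property (using adaptedness of $\delta Z_t$) to eliminate the $M^\zeta_t$ term, cancellation of the gradient terms, and convexity in $U$ for sufficiency. The only cosmetic difference is that you construct the adapted costate $\zeta_t$ directly from the corrected backward recursion, whereas the paper first introduces an unadapted adjoint $\xi_t$ and then projects, $\zeta_t = \EE[\xi_t \mid \Fs^Z_t \vee \Fs^Y_t]$; both arrive at the same optimality condition, with the same implicit conditioning of $\zeta_{t+1}$ in the final control formula.
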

The agent's equilibrim policy doesn't depend explicitly on the adversary's control actions but might have implicit dependence through the mean-field trajectories. Notice that the co-state equation depends on a martingale difference sequence adapted to a filtration. This makes the control policy adapted to the filtration $\Fs^Z_t \vee \Fs^{Y}_t$. This gives us the form of equilibrium control of the generic agent but still does not provide its closed-form expression. Towards that end we need to first characterize the equilibrium local and global MFs. 
And for this, we first derive the form of equilibrium control of the adversary, which is shown to depend on just the global MF of the game.
\begin{theorem} \label{thm:adv_cont}
	If the following condition is satisfied
	\begin{align} \label{eq:valid_cond}
	S_t - C^T \hat{P}_{t+1} C > 0 ,
	\end{align}
	where the matrix $\hat{P}_{t}$ is defined recursively by
	$\hat{P}_t = Q_t + A^T \hat{P}_{t+1} A + 
	A^T \hat{P}_{t+1} C (S_t - C^T \hat{P}_{t+1} C)^{-1} C^T \hat{P}_{t+1} A, \hspace{0.2cm} \hat{P}_T = Q_T $, 
	then the equilibrium control policy of the adversary is,
	\begin{align} \label{eq:adv_cont}
	V^*_t = - S^{-1}_t  C^T  \bar{\zeta}^0_{t+1}, \bar{\zeta}^0_t = A^T  \bar{\zeta}^0_{t+1} - Q_t \bar{Z}_t, \bar{\zeta}^0_T = - Q_T \bar{Z}_T ,
	\end{align}
	where  $\bar{\zeta}^0_t$ is the adversary's co-state and $\bar{Z}$ is the global MF of the agents.
\end{theorem}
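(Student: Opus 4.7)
The plan is to recast the adversary's optimization as a deterministic LQ control problem on the global mean-field trajectory and solve it via an open-loop maximum principle, mirroring the derivation used for the generic agent in Theorem \ref{thm:agent_cont}. I will treat $\mu^*$ as fixed (its coupling to $V$ is resolved later by the equilibrium consistency conditions), so the adversary is really optimizing over a linear dynamical system driven by $V_t$ with a quadratic running cost in $(\bar{Z}_t, V_t)$.

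First I would isolate the $V$-dependent portion of the adversary's cost $J^0(V,\mu^*,\bar{Z}^*)=-\lim_{N\to\infty}\tfrac{1}{N}\sum_i J^i_N$. Because $CV_t$ enters every agent's dynamics identically and the noises $W_t^i$ and initial states $Z_0^i$ are independent, the law of large numbers lets one split each averaged quadratic term into a mean contribution plus a covariance contribution; the covariance pieces depend on $\{Q_t,\bar{Q}_t,\tilde{Q}_t,R_t\}$ and on the fixed agent noise statistics, but they are independent of $V$ because $V$ shifts only the mean, not the spread. The surviving $V$-dependent cost is $\sum_{t=0}^{T-1}\bigl(-\|\bar{Z}_t\|^2_{Q_t}+\|V_t\|^2_{S_t}\bigr)-\|\bar{Z}_T\|^2_{Q_T}$, up to an additive $V$-independent constant, and the global mean obeys $\bar{Z}_{t+1}=A\bar{Z}_t+B\bar{U}_t+CV_t$, where $\bar{U}_t$ is the mean of $U_t^{i}$ under the fixed policy $\mu^*$ and is therefore exogenous from the adversary's viewpoint.

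Next, I would apply Pontryagin's maximum principle to this deterministic LQ problem. Setting up the Hamiltonian
\begin{equation*}
H_t = -\|\bar{Z}_t\|^2_{Q_t}+\|V_t\|^2_{S_t}+\bigl(\bar{\zeta}^0_{t+1}\bigr)^{\!\top}\bigl(A\bar{Z}_t+B\bar{U}_t+CV_t\bigr),
\end{equation*}
the stationarity condition $\partial_V H_t=0$ yields $V^*_t=-S_t^{-1}C^\top \bar{\zeta}^0_{t+1}$, absorbing the usual factor of two into the convention for $\bar{\zeta}^0$. The adjoint equation $\bar{\zeta}^0_t=\partial_{\bar{Z}_t}H_t=A^\top \bar{\zeta}^0_{t+1}-Q_t\bar{Z}_t$ with terminal condition $\bar{\zeta}^0_T=-Q_T\bar{Z}_T$ exactly matches the recursion in \eqref{eq:adv_cont}. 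Note that $\bar{U}_t$ drops out of both the $V$-stationarity condition and the adjoint equation (it contributes only the $V$-independent term $\langle\bar{\zeta}^0_{t+1},B\bar{U}_t\rangle$ to $H_t$), which is why only $Q_t$, and not $\bar{Q}_t,\tilde{Q}_t,R_t$, appears in the adversary's co-state.

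The main obstacle is sufficiency: the adversary's cost is indefinite, rewarding large $\|\bar{Z}\|^2_{Q_t}$ while penalising $\|V_t\|^2_{S_t}$, so the first-order conditions do not automatically yield a minimum. To handle this, I would complete the square by substituting a value function of the form $\EE[\|\bar{Z}_t\|^2_{\hat{P}_t}]+\text{const}_t$ and deriving the induced Riccati recursion
\begin{equation*}
\hat{P}_t=Q_t+A^\top \hat{P}_{t+1}A+A^\top \hat{P}_{t+1}C\bigl(S_t-C^\top \hat{P}_{t+1}C\bigr)^{-1}C^\top \hat{P}_{t+1}A,\quad \hat{P}_T=Q_T,
\end{equation*}
where the inverse appears exactly at the step where one optimises the quadratic $\|V_t\|^2_{S_t}-\|CV_t+(\text{rest})\|^2_{\hat{P}_{t+1}}$ over $V_t$. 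The hypothesis $S_t-C^\top \hat{P}_{t+1}C>0$ is precisely the second-order condition that makes this quadratic strictly convex in $V_t$, guaranteeing that the critical point identified by the maximum principle is the unique global minimiser and that the backward recursion for $\hat{P}_t$ is well defined over $[0,T]$. With sufficiency in hand, uniqueness of $V^*_t$ follows from strict convexity, completing the proof.
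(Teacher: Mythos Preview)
Your proposal is correct and reaches the same co-state recursion and sufficiency condition as the paper, but the route differs in the order of operations. The paper first writes the stochastic maximum principle for the adversary in the \emph{finite}-$N$ game: it obtains per-agent co-states $\zeta^{i,0}_t$ that carry all of $Q_t,\bar{Q}_t,\tilde{Q}_t$, sums them to get $\zeta^0_t=\sum_i\zeta^{i,0}_t$, and only then sends $N\to\infty$; the $\bar{Q}_t$ and $\tilde{Q}_t$ contributions vanish at this stage because $\tfrac{1}{N}\sum_i(Z^i_t-\bar{Z}_t)\to0$ and (by homogeneous mixing) $\tfrac{1}{N}\sum_iY^i_t\to\bar{Z}_t$. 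You instead take the limit first, invoking the law of large numbers to split each averaged quadratic into a mean part plus a $V$-independent covariance part, and then apply a purely deterministic Pontryagin argument to the reduced problem on $\bar{Z}_t$. Your route is cleaner analytically (no martingale difference terms, no stochastic adjoints) and your completion-of-squares derivation of the condition $S_t-C^\top\hat{P}_{t+1}C>0$ is more self-contained than the paper's appeal to the zero-sum concavity condition in \cite{bacsar1998dynamic}. The paper's finite-$N$-first route, on the other hand, makes the interchange of limit and optimization less of an issue and keeps the consistency step (``tracking signal equals aggregate state'') explicit rather than folded into the LLN reduction. One point to tighten in your write-up: when you call $\bar{U}_t$ ``exogenous'', you are implicitly adopting the open-loop Nash viewpoint (agents' controls are held as fixed sequences, not feedback maps, when the adversary perturbs $V$); this is exactly what the paper does too, but it is worth stating since otherwise a reader might object that $\bar{U}_t$ depends on $\bar{Z}_t$ through the feedback law $\mu^*$ and should therefore enter the adjoint.
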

The theorem provides the form of equilibrium control for the adversary. The proof starts by obtaining a SMP for the adversary's equilibrium control problem in the finite population setting. Then by taking the limit $N \rightarrow \infty$ we arrive at the conclusion. The condition on positive definiteness of the matrix $S_t - C^T \hat{P}_{t+1} C$ is standard condition in two-player zero sum games \cite{bacsar1998dynamic}. This arises because in essence the adversary plays a zero sum game with the global MF. 

\subsection{Equilibrium MF analysis and equilibrium policies} \label{subsec:gmf_anl_adv_pol}
Having obtained the form of equilibrium control for both the adversary and the generic agent, we now characterize the equilibrium local and global MFs. This will allow us to compute the closed-form expressions for the equilibrium control policies of the generic agent and the adversary. MF characterization involves proving existence, uniqueness and some useful properties of the equilibrium global and local MFs. 
\begin{theorem} \label{thm:eq_glo_MF}
	If condition \eqref{eq:valid_cond} is satisfied, then equilibrium global MF follows linear dynamics,
	\begin{align} \label{eq:gmf_lin}
	\bar{Z}^*_{t+1} = E^{-1}_t A \bar{Z}^*_t = \bar{F}_t \bar{Z}^*_t,
	\end{align}
	where $E_t = (I + B R^{-1}_t B^T \bar{P}_{t+1} - C S^{-1}_t C^T \bar{P}_{t+1}) $ and $\bar{P}_t$ is given by the Riccati equation, 
	$\bar{P}_t = A^T \bar{P}_{t+1} E^{-1}_t A + Q_t, P_T = Q_T$, 
	and the equilibrium adversarial policy,
	\begin{align} \label{eq:adv_cont_fin}
	V^*_t = S^{-1}_t C^T \bar{P}_t \bar{F}_t \bar{Z}^*_t .
	\end{align}
	is linear in the equilibrium global MF.
\end{theorem}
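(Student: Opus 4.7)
The plan is to build the equilibrium global mean-field from the aggregate behavior induced by the equilibrium policies of Theorems 1 and 2. The key device is to take population expectations, aggregate the generic agent's co-state equation, and then postulate a linear (Riccati-type) relation between the aggregated co-state and the global mean-field. The homogeneous mixing assumption $\EE[Y_t] = \bar{Z}_t$ is what makes the aggregated co-state equation decouple cleanly from the local mean-field.

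First, I would apply expectation to the generic agent's co-state recursion from Theorem 1. Since $M^\zeta_t$ is a martingale difference (and therefore has zero unconditional mean), and since the consensus terms satisfy $\EE[\bar{Q}_t(Z_t-\bar{Z}_t)] = 0$ and $\EE[\tilde{Q}_t(Z_t-Y_t)] = 0$ by homogeneous mixing, the aggregated co-state $\bar{\zeta}_t := \EE[\zeta_t]$ obeys
\begin{align*}
\bar{\zeta}_t = A^T \bar{\zeta}_{t+1} + Q_t \bar{Z}_t,\qquad \bar{\zeta}_T = Q_T \bar{Z}_T.
\end{align*}
Comparing with the adversary co-state from Theorem 2 yields $\bar{\zeta}^0_t = -\bar{\zeta}_t$, so that $V^*_t = S^{-1}_t C^T \bar{\zeta}_{t+1}$. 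Similarly, taking expectations of $U^*_t = -R^{-1}_t B^T \zeta_{t+1}$ gives $\EE[U^*_t] = -R^{-1}_t B^T \bar{\zeta}_{t+1}$. Substituting into the aggregated state equation produces
\begin{align*}
\bar{Z}_{t+1} = A \bar{Z}_t + (C S^{-1}_t C^T - B R^{-1}_t B^T)\, \bar{\zeta}_{t+1}.
\end{align*}

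Next I would close the system by the Riccati ansatz $\bar{\zeta}_t = \bar{P}_t \bar{Z}_t$. The terminal condition $\bar{\zeta}_T = Q_T \bar{Z}_T$ forces $\bar{P}_T = Q_T$. Plugging the ansatz into the state recursion yields $(I + B R^{-1}_t B^T \bar{P}_{t+1} - C S^{-1}_t C^T \bar{P}_{t+1})\bar{Z}_{t+1} = A \bar{Z}_t$, i.e.\ $E_t \bar{Z}_{t+1} = A \bar{Z}_t$, which gives the claimed linear dynamics $\bar{Z}^*_{t+1} = \bar{F}_t \bar{Z}^*_t$. Substituting this back into the co-state recursion and matching coefficients of $\bar{Z}_t$ (valid because the recursion must hold for all admissible initial conditions) produces the Riccati equation $\bar{P}_t = A^T \bar{P}_{t+1} E^{-1}_t A + Q_t$. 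The adversarial policy formula then follows immediately: $V^*_t = S^{-1}_t C^T \bar{P}_{t+1} \bar{F}_t \bar{Z}^*_t$, which is linear in the equilibrium global mean-field.

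The main obstacle is well-posedness: the construction is valid only if $E_t$ is invertible at every $t$, so one must show that condition \eqref{eq:valid_cond} from Theorem 2 implies invertibility of $I + BR^{-1}_t B^T \bar{P}_{t+1} - C S^{-1}_t C^T \bar{P}_{t+1}$. I would handle this by backward induction on $t$, using that at $t=T$ we have $\bar{P}_T = Q_T \ge 0$ and then showing that the positive-definiteness of $S_t - C^T \hat{P}_{t+1} C$ controls the indefinite term $-CS^{-1}_t C^T \bar{P}_{t+1}$; an auxiliary comparison between $\bar{P}_t$ and the Riccati iterate $\hat{P}_t$ of Theorem 2 (they coincide in structure when $B=0$) should give the required sign control. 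Once invertibility is secured, the remaining steps are algebraic substitutions and the uniqueness of the Riccati solution handles uniqueness of the equilibrium global MF.
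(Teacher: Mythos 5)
Your proposal follows essentially the same route as the paper's own proof: take expectations of the state and co-state equations using the consistency and homogeneous-mixing conditions, identify $\bar{\zeta}_t = -\bar{\zeta}^0_t$ by comparison with the adversary's co-state, close the coupled forward--backward system with a Riccati-type ansatz, and match coefficients of $\bar{Z}^*_t$ to obtain the Riccati equation and the linear dynamics $\bar{Z}^*_{t+1} = E^{-1}_t A \bar{Z}^*_t$ (the paper uses the affine ansatz $\bar{\zeta}_t = \bar{P}_t \bar{Z}^*_t + \bar{s}_t$ and then shows $\bar{s}_t \equiv 0$, a verification your homogeneous ansatz shortcuts harmlessly since the terminal condition is linear in $\bar{Z}^*_T$ and the recursion has no exogenous forcing). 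Two side remarks: your policy expression $V^*_t = S^{-1}_t C^T \bar{P}_{t+1}\bar{F}_t \bar{Z}^*_t$ is the indexing actually consistent with the ansatz (the paper's $\bar{P}_t$ appears to be a typographical slip), and the invertibility of $E_t$ that you flag is a genuine gap the paper's proof leaves implicit, so raising it is appropriate even though your backward-induction sketch for it is not fully worked out.
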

Now we need to characterize the dynamics of equilibrium local MF so as to obtain the closed-form expression of the generic agent's equilibrium control policy. Let us define $ Y^*_t = \sum_{j \in \Ns} Z^{j*}_t/\lvert \Ns \rvert$, where each $j$ corresponds to a generic agent chosen independently from infinitely many agents (due to homogenous mixing property) and $\Ns$ corresponds to the neighborhood of the generic agent. The dynamics of the equilibrium local MF can be characterized as follows,
\begin{theorem} \label{thm:eq_loc_MF}
	If the condition \eqref{eq:valid_cond} is satisfied, then local equilibrium MF has linear Gaussian dynamics driven by the equilibrium global MF:
	\begin{align} \label{eq:lmf_lin}
	Y^*_{t+1} = \tilde F^1_t  Y^*_t   + \tilde F^2_t \bar{Z}^*_t + \tilde  E^{-1}_t \tilde W_t 
	\end{align}
	where $\tilde W_t = \sum_{j \in \Ns} W^j_t/\lvert \Ns \rvert$, 
	\begin{align*}
	\tilde E_t = & (I + B R^{-1} B^T \tilde P_{t+1} - C S^{-1}_t C^T \tilde P_{t+1}), \tilde F^1_t = \tilde E^{-1}_t A, \\
	\tilde F^2_t = & \tilde E^{-1}_t (B R^{-1}_t B^T - C S^{-1}_t C^T))  \sum_{i=0}^{T-t} \prod_{j=1}^{i} \tilde H_{t+j} \bar{Q}_{t+i}  \bar{F}_{t+j} \nonumber \\
	\tilde P_t = & A^T \tilde P_{t+1}\tilde E^{-1}_t A + Q_t + \bar{Q}_t, \hspace{0.2cm} \tilde P_T = Q_T + \bar{Q}_T,
	\end{align*}
	and $\tilde H_t = A^T (I - \tilde P_{t+1} \tilde E^{-1}_t B R^{-1} B^T)$.
\end{theorem}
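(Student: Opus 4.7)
The plan is to mimic the backward--forward Riccati analysis that produced $\bar Z^*$ in Theorem~3, now applied pathwise to each agent in the generic neighborhood and averaged. I will apply the stochastic maximum principle of Theorem~1 to each agent $j\in\Ns$, close the loop by positing a linear--affine ansatz on its co-state, substitute the equilibrium adversarial policy from Theorem~3, and then average over the neighborhood. The homogeneous mixing hypothesis (Assumption~1) is what allows the averages of individual local MFs $Y^j$ perceived by the neighbors to be replaced in expectation by the global MF $\bar Z^*$, so that the closed dynamics only involve $Y^*$, $\bar Z^*$, and the averaged noise.

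For each $j$, I would posit $\zeta^j_t = \tilde P_t Z^j_t + \rho^j_t$, where $\tilde P_t$ is a deterministic symmetric matrix and $\rho^j_t$ is an $\Fs^{Z^j}_t\vee\Fs^{Y^j}_t$-predictable affine functional of $\bar Z^*$ (shared across agents) and $Y^j$ (exogenous for agent $j$). Substituting this into the agent's dynamics via $U^{j*}_t=-R^{-1}_t B^T\zeta^j_{t+1}$ and into the backward co-state recursion of Theorem~1, and matching the coefficient of $Z^j_t$, should yield the Riccati $\tilde P_t = A^T\tilde P_{t+1}\tilde E^{-1}_t A + Q_t + \bar Q_t$ with terminal $\tilde P_T=Q_T+\bar Q_T$. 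The reason the weight is $Q_t+\bar Q_t$ rather than $Q_t+\bar Q_t+\tilde Q_t$ is that the $-\tilde Q_t Y^j_t$ term in Theorem~1's co-state equation is purely a state-independent forcing and is absorbed into $\rho^j_t$. The $-CS^{-1}_tC^T\tilde P_{t+1}$ contribution inside $\tilde E_t$ is obtained by re-grouping $CV^*_t$ against the aggregate state: using the implicit identity $\bar\zeta^0_t=-\bar P_t\bar Z^*_t$ (which matches Theorem~3's closed form), the deterministic adversarial input can, after averaging, be written as a feedback gain $S^{-1}_t C^T\tilde P_{t+1}$ applied to the aggregate state at time $t+1$, with the residual folded into $\rho^j_t$.

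The offset $\rho^j_t$ satisfies a forced backward recursion whose forcings at time $s$ are $-\bar Q_s\bar Z^*_s$ and $-\tilde Q_s Y^j_s$, propagated by the closed-loop transition operator $\tilde H_s=A^T(I-\tilde P_{s+1}\tilde E^{-1}_s BR^{-1}B^T)$. Iterating from $T$ down to $t+1$ expresses $\rho^j_{t+1}$ as a telescoping sum of future forcings weighted by products of $\tilde H$'s. Averaging the forward recursion of Step~2 over $j\in\Ns$ then produces $Y^*_{t+1}$ on the left; on the right the $W^j_t$ aggregate to $\tilde W_t$, the $-\tilde Q_sY^j_s$ forcings collapse in the mean to $-\tilde Q_s\bar Z^*_s$ by homogeneous mixing, and the remaining $\bar Z^*_s$-forcings are propagated forward using Theorem~3's law $\bar Z^*_{s}=\bar F_{s-1}\cdots\bar F_t\,\bar Z^*_t$. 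Collecting the two sources of $\bar Z^*$-dependence yields exactly the product-sum form of $\tilde F^2_t$ with the stated factors $\tilde H_{t+j}$ and $\bar F_{t+j}$, while the coefficient of $Y^*_t$ reduces to $\tilde F^1_t=\tilde E^{-1}_t A$ and the residual driving noise takes the claimed form $\tilde E^{-1}_t\tilde W_t$.

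The hardest step will be the algebraic identification of $\tilde E_t$: carefully showing that the deterministic adversarial feedback, originally expressed in Theorem~3 via $\bar P$ and $\bar Z^*$, can be rewritten against $\tilde P_{t+1}$ and the aggregate state so as to produce the $-CS^{-1}_tC^T\tilde P_{t+1}$ term without disturbing the forcing accounting. Once this identification and the Riccati are in place, unwinding the co-state offset into the product-sum expression for $\tilde F^2_t$ is tedious but mechanical, and the homogeneous-mixing step is a single application of $\EE[Y^j_s]=\bar Z^*_s$.
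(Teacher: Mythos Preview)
Your handling of the $\tilde Q_t$ terms contains a genuine error, and it stems from working at the individual level rather than averaging first as the paper does. Every agent $j\in\Ns$ shares the \emph{same} local mean field $Y^j_t=Y^*_t$, since they all belong to the one neighborhood $\Ns$. Hence, when Theorem~\ref{thm:agent_cont}'s co-state recursion is averaged over $j\in\Ns$, the term $(Q_t+\bar Q_t+\tilde Q_t)Z^j_t$ becomes $(Q_t+\bar Q_t+\tilde Q_t)Y^*_t$ while $-\tilde Q_tY^j_t$ becomes $-\tilde Q_tY^*_t$, and these $\tilde Q_t$ contributions cancel \emph{exactly, pathwise}. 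That is why the averaged co-state $\tilde\zeta_t$ satisfies a recursion with weight $Q_t+\bar Q_t$ and why the deterministic offset $\tilde s_t$ in the paper's ansatz $\tilde\zeta_t=\tilde P_tY^*_t+\tilde s_t$ is forced only by $-\bar Q_t\bar Z^*_t$. No appeal to homogeneous mixing enters this step; Assumption~1 is only an expectation statement, and your proposed replacement of $-\tilde Q_sY^j_s$ by $-\tilde Q_s\bar Z^*_s$ would (i) hold only in mean, not pathwise, so it cannot deliver the pathwise dynamics \eqref{eq:lmf_lin}, and (ii) inject spurious $\tilde Q_s$-weighted $\bar Z^*_s$ forcings into $\tilde F^2_t$ that are absent from the statement. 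Relatedly, your individual-level ansatz $\zeta^j_t=\tilde P_tZ^j_t+\rho^j_t$ with Riccati weight $Q_t+\bar Q_t$ cannot be closed: matching the $Z^j_t$ coefficient at the individual level forces weight $Q_t+\bar Q_t+\tilde Q_t$, and pushing the residual $+\tilde Q_tZ^j_t$ into $\rho^j_t$ contradicts your own stipulation that $\rho^j_t$ depends only on $(\bar Z^*,Y^j)$.

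Your difficulty with the $-CS^{-1}_tC^T\tilde P_{t+1}$ piece of $\tilde E_t$ is a symptom of the same issue. The paper writes the neighborhood-averaged forward equation directly with the combined feedback $-(BR^{-1}_tB^T-CS^{-1}_tC^T)\tilde\zeta_{t+1}$, so that after substituting $\tilde\zeta_{t+1}=\tilde P_{t+1}Y^*_{t+1}+\tilde s_{t+1}$ the matrix $\tilde E_t$ appears immediately; no rewriting of the adversarial input against $\tilde P_{t+1}$ at the individual-agent level is attempted or needed.
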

Having shown that the equilibrium global MF follows deterministic linear dynamics and the equilibrium local MF follows stochastic linear dynamics driven by the equilibrium global MF, we now turn our attention to obtaining the expression for the equilibrium control policy for the generic agent $\mu^*$. We start by concatenating the equilibrium local and global MFs with the state of the generic agent $X_t = [Z^T_t, Y^{*T}_t, \bar{Z}^{*T}_t]^T$. Now finding the generic agent's equilibrium control policy, turns into an LQG problem with dynamics,
\begin{align*}
X_{t+1} = \bar{A}_t X_t + \bar{B} U_t + \bar{W}_t
\end{align*}
where,
\begin{align*}
\bar{A}_t = \begin{pmatrix}
A & 0 & 0 \\ 0 & \tilde F^1_t & \tilde F^2_t \\ 0 & 0 & \bar F_t
\end{pmatrix}, \bar{B} = \begin{pmatrix}
B \\ 0 \\ 0
\end{pmatrix}, \bar{W}_t = \begin{pmatrix}
W_t \\ \tilde  E^{-1}_t \tilde W_t  \\ 0
\end{pmatrix}
\end{align*}
The reformulated cost function is
\begin{align*}
& J =  \sum_{t=0}^{T-1} \EE\big[ \big\lVert X_t \big\rVert^2_{\Qs_t} + \big\lVert U_t \big\rVert^2_{R_t} - \big\lVert V_t \big\rVert^2_{S_t} \big] + \EE \big[ \big\lVert Z_T \big\rVert^2_{Q_T} \big]
\end{align*}
where $V_t$ is given by \eqref{eq:adv_cont_fin} and 
\begin{align*}
\Qs_t = \begin{pmatrix}
Q_t + \bar{Q}_t + \tilde{Q}_t & - \tilde{Q}_t & - \bar{Q}_t \\
- \tilde{Q}_t & \tilde{Q}_t & 0 \\
- \bar{Q}_t & 0 & - \bar{Q}_t
\end{pmatrix}
\end{align*}
for $t = 0,\ldots, T$. If condition \eqref{eq:valid_cond} is satisfied, this is a standard time varying linear quadratic regulator (TV-LQR) problem. Using standard techniques \cite{lewis2012optimal} the equilibrium control for the generic agent is,
\begin{align*}
U^*_t = \mu^*_t (Z_t, Y^{*}_t, \bar{Z}^{*}_t) = -(R_t + \bar{B}^T P^*_{t+1} \bar{B})^{-1} \bar{B}^T P^*_{t+1} \bar{A}_t X_t
\end{align*}
where 
$P^*_t$ satisfies the Riccati equation:
$P^*_t = \Qs_t + \bar{A}^T_t P^*_{t+1} \bar{A}_t 
 - \bar{A}^T_t P^*_{t+1} \bar{B} (R_t + \bar{B}^T P^*_{t+1} \bar{B} )^{-1} \bar{B}^T P^*_{t+1} \bar{A}_t, P^*_T = \Qs_T$.

We have thus completely characterized the MFE of the A-MFG over networks. 
The equilibrium control policy of the adversary \eqref{eq:adv_cont_fin}, depends solely on the equilibrium global MF whose dynamics are deterministic and linear \eqref{eq:gmf_lin}. So the equilibrium global MF and equilibrium adversary policy can be precomputed. The equilibrium control policy of the generic agent, on the other hand, depends on the equilibrium local and global MFs. Since the dynamics of the equilibrium local MF is Gaussian \eqref{eq:lmf_lin}, it cannot be computed and hence needs to be observed by the generic agent. Thus the generic agent's equilibrium control policy has closed-loop dependence on the equilibrium local MF but open-loop dependence on the equilibrium global MF.

Considering opinion dynamics over multigraphs, this means that consensus over a global graph can be precomputed with high level of confidence (due to the large number of agents). But consensus over the local graph (neighborhood) must be actively measured as it is significantly perturbed by the deviations of individual agents in the neighborhood.

	
	\section{Conclusion} \label{sec:conc}
	We have introduced the model of the Adversarial Linear Quadratic Mean-Field Games over Multigraphs, to study the strategic interactions among a network of agents and an exogenous adversary. The network is of multigraph type which is composed of two graphs. The global graph (representing virtual interactions) connects together all the agents, whereas the local graph (representing physical interactions) couples a subset of agents to form a neighborhood. The agents aim to form consensus over the multigraph whereas the adversary aims to disrupt that consensus.

Under homogeneous mixing hypothesis of the multigraph, we have shown that the equilibrium control policy of the adversary depends linearly on the equilibrium global MF whereas the equilibrium control policy of the generic agent depends linearly on the equilibrium local and global MFs. Furthermore, we have also shown that the equilibrium global MF can be precomputed, as opposed to the equilibrium local MF which follows linear Gaussian dynamics. As a result the equilibrium control policy of the adversary can be computed offline whereas the equilibrium control policy of the generic agent has online dependence on the state of the generic agent, the equilibrium local and global MFs.

As one extension, our goal is to show that the MFE of the A-MFG over multigraphs, is also an approximate Nash equilibrium for the $N$-player Adversarial game over multigraphs. Our goal, in particular, is to investigate the level of the approximation as a function of increasing neighborhood and global graph size ($\lvert \Ns \rvert$ and $N$ respectively). Other possible extensions to this work include generalizing the local graph structure to overlapping neighborhoods with sparse interconnections.
	
	\bibliographystyle{IEEEtran} 
	\bibliography{references,MARL_Springer_1,MARL_Springer_2,RL} 
	\newpage
	\section*{Appendix}
	
\begin{proof}[Proof of Theorem \ref{thm:agent_cont}]
	Consider perturbation of optimal control $U_t + \tau \tilde{U}_t$, $\tau \in \RR^+$ and the control $\tilde{U} = (\tilde{U}_0, \tilde{U}_1,\ldots)$ is adapted to filtration $\Fs^Z_t \vee \Fs^{Y}_t$. The original state becomes $Z_t + \tau \tilde{Z}_t$ where
	\begin{align*}
	\tilde{Z}_{t+1} = A \tilde{Z}_t + B \tilde{U}_t, \tilde{Z}_0 = 0.
	\end{align*}
	Due to optimality of control the necessary Euler condition must be satisfied:
	\begin{align*}
	& 0 =  \frac{d}{d\tau} J \bigg\rvert_{\tau = 0} = 2 \EE \bigg[ \langle U_0, \tilde U_0 \rangle_{R_t} + \sum_{t=0}^{T} \big( \langle Z_t, \tilde Z_t \rangle_{Q_t} \\ 
	& + \langle U_t, \tilde U_t \rangle_{R_t} + \langle Z_t - \bar{Z}_t, \tilde Z_t \rangle_{\bar{Q}_t} + \langle Z_t - Y_t, \tilde Z_t \rangle_{\tilde{Q}_t} \big) \nonumber \\
	& + \langle Z_T, \tilde Z_T \rangle_{Q_T} + \langle Z_T - \bar{Z}_T, \tilde Z_T \rangle_{\bar{Q}_T} + \langle Z_T - Y_T, \tilde Z_T \rangle_{\tilde{Q}_T} \bigg] \nonumber 
	\end{align*}
	Let us introduce the adjoint process:
	\begin{align*}
	\xi_t & = A^T \xi_{t+1} + (Q_t + \bar{Q}_t + \tilde{Q}_t) Z_t - \bar{Q}_t \bar{Z}_t - \tilde{Q}_t Y_t - \Delta M^\xi_t, \nonumber \\
	\xi_T & = (Q_T + \bar{Q}_T + \tilde{Q}_T) Z_T - \bar{Q}_T \bar{Z}_T - \tilde{Q}_T Y_T
	\end{align*}
	where,
	$\Delta M^\xi_t = A^T \xi_{t+1} - A^T \EE \big[\xi_{t+1} \mid \Fs_t \big]$. 
	Note that $A^T \EE[\xi_{t+1} \mid \Fs_t] + (Q + \bar{Q}_t + \tilde{Q}_t) Z_t - \bar{Q}_t \bar{Z}_t - \tilde{Q}_t Y_t = \xi_t \in \Fs_t$. Consider,
	\begin{align*}
	& \langle \tilde Z_{t+1}, \xi_{t+1} \rangle - \langle \tilde Z_{t}, p_{t}  \rangle \nonumber \\
	& = \langle \tilde Z_{t+1} - \tilde Z_{t}, \xi_{t+1} \rangle - \langle \tilde Z_{t}, \xi_{t+1} - p_{t} \rangle  \nonumber \\
	& = \langle A \tilde Z_t + B \tilde U_t - \tilde Z_{t}, \xi_{t+1} \rangle - \langle \tilde Z_{t}, \xi_{t+1} - A^T \xi_{t+1} \nonumber \\
	& \hspace{0.4cm} - (Q_t + \bar{Q}_t + \tilde{Q}_t) Z_t + \bar{Q}_t \bar{Z}_t + \tilde{Q}_t Y_t + \Delta M^\xi_t \rangle \nonumber \\
	& = \langle \tilde Z_t, -(Q + \bar{Q}_t + \tilde{Q}_t) Z_t + \bar{Q}_t \bar{Z}_t + \tilde{Q}_t Y_t \rangle + \langle B \tilde U_t, \xi_{t+1} \rangle \nonumber \\
	& \hspace{0.4cm} + \langle \tilde Z_t, \Delta M^\xi_t \rangle 
	\end{align*}
	Summing up for $t = 0$ to $T-1$ and taking expectation,
	\begin{align*}
	0 = &  \EE \big[ \langle \tilde Z_T, -(Q_T + \bar{Q}_T + \tilde{Q}_T) Z_T + \bar{Q}_T \bar{Z}_T + \tilde{Q}_T Y_T \rangle \nonumber \\
	& + \sum_{t=0}^{T-1} \langle \tilde Z_t, -(Q_t + \bar{Q}_t + \tilde{Q}_t) Z_t + \bar{Q}_t \bar{Z}_t + \tilde{Q}_t Y_t \rangle \nonumber \\
	& + \langle B \tilde U_t, \xi_{t+1} \rangle + \langle \tilde Z_t, \Delta M^\xi_t \rangle  \big]
	\end{align*}
	Using the Euler condition and noting that $U_t$ and $\tilde U_t$ are adapted to process $\Fs^Z_t \vee \Fs^{Y}_t$,
	\begin{align*}
	0 & = \EE \big[ \sum_{t=0}^{T-1} \langle R_t U_t + B^T \xi_{t+1}, \tilde U_t \rangle \big] \nonumber \\
	& = \EE \big[ \sum_{t=0}^{T-1} \langle R_t U_t + B^T \EE[\xi_{t+1} \mid \Fs^Z_t \vee \Fs^{Y}_t] , \tilde U_t \rangle \big].
	\end{align*}
	Since $\tilde U_t$ is arbitrary, we get the optimal control $U_t = - R^{-1}_t B^T \EE [\xi_{t+1} \mid \Fs^Z_t \vee \Fs^{Y}_t]$, $t = 0,1, \ldots, T-1$. Let $\zeta_t = \EE[\xi_t \mid \Fs^Z_t \vee \Fs^{Y}_t]$ then
	\begin{align*}
	\zeta_t & = \EE[\xi_t \mid \Fs^Z_t \vee \Fs^{Y}_t] = A^T \EE[\xi_{t+1} \mid \Fs^Z_t \vee \Fs^{Y}_t]  \\
	& + (Q_t + \bar{Q}_t + \tilde{Q}_t) Z_t - \bar{Q}_t \bar{Z}_t - \tilde{Q}_t Y_t \nonumber \\
	& = A^T \zeta_{t+1} + (Q_t + \bar{Q}_t + \tilde{Q}_t) Z_t - \bar{Q}_t \bar{Z}_t - \tilde{Q}_t Y_t - \Delta M^\zeta_t \nonumber
	\end{align*}
	where
	\begin{align*}
	& \Delta M^\zeta_t = A^T \zeta_{t+1} - A^T \EE [p_{k+1} \mid \Fs^Z_t \vee \Fs^{Y}_t] \nonumber \\
	& = A^T \zeta_{t+1} - A^T \EE [ \EE [p_{k+1} \mid \Fs^Z_{t+1} \vee \Fs^{\bar{Z}}_{t+1}] \mid \Fs^Z_t \vee \Fs^{Y}_t] \nonumber \\
	& = A^T \zeta_{t+1} - A^T \EE \big[\zeta_{t+1} \mid \Fs^Z_t \vee \Fs^{Y}_t \big].
	\end{align*}
	Therefore the optimal control is 
	$U_t = - R^{-1}_t B^T \zeta_{t+1}$. 
	Since the cost is convex, the optimal control is unique and the necessary condition for optimality is also sufficient.
\end{proof}

\begin{proof}[Proof of Theorem \ref{thm:adv_cont}]
	First we obtain the form of the optimal control for the adversary, for the finite population setting. Using techniques similar to the proof of Theorem \ref{thm:agent_cont}, we can obtain the form of optimal control of adversary:
	\begin{align*}
	0 = & S V_t + C^T \sum_{i=1}^{N} \zeta^{i,0}_{t+1} = S V_t + C^T  \zeta^0_{t+1} \nonumber \\
	& \implies V_t = - S^{-1}_t C^T \zeta^0_{t+1},
	\end{align*}
	and the co-state equations are
	\begin{align*}
	\zeta^{i,0}_t = & - \frac{1}{N} Q_t Z^i_t -  \frac{1}{N} \tilde{Q}_t \big(Z^i_t - Y^i_t \big) - \frac{1}{N} \bar{Q}_t\big(Z^i_t - \bar{Z}_t \big) \nonumber \\
	& + \frac{1}{N} A^T \zeta^{i,0}_{t+1} -   M^{\zeta,0}_t, \hspace{0.4cm} t = 1,\ldots,T-1 \\
	\zeta^{i,0}_T = & - \frac{1}{N} Q_T Z^i_T -  \frac{1}{N} \tilde{Q}_T \big(Z^i_T - Y^i_T \big) - \frac{1}{N} \bar{Q}_T \big(Z^i_T - \bar{Z}_T \big) \nonumber
	\end{align*}
	for $i = (1,\ldots, N)$, where $\zeta^0_t := \sum_{i = 1}^{N}\zeta^{i,0}_t$ and $M^{\zeta,0}_t$ is a Martingale difference sequence adapted to the filtration $\Fs^0_t = \Fs^{Z^1}_t \vee \Fs^{Z^1}_t \ldots \Fs^{Z^N}_t \vee \Fs^{Z^N}_t$.
	\begin{align*}
	M^{\zeta,0}_t = \frac{1}{N} A^T \zeta^{i,0}_{t+1} - \frac{1}{N} A^T  \EE [\zeta^{i,0}_{t+1} \mid \Fs^0_t] .
	\end{align*}
	For this necessary condition to be sufficient, the adversary's cost needs to be convex in $V$. This is satisfied by the  concavity condition in 2-player zero-sum games \cite{bacsar1998dynamic},
	$
	S_t - C^T \hat{P}_{t+1} C > 0
	$
	where the matrix $\hat{P}_{t}$ is defined recursively,
	\begin{align*}
	\hat{P}_t &  = Q_t + A^T \hat{P}_{t+1} A + \\
	& A^T \hat{P}_{t+1} C (S_t - C^T \hat{P}_{t+1} C)^{-1} C^T \hat{P}_{t+1} A, \hspace{0.2cm} \hat{P}_T = Q_T \nonumber
	\end{align*}
	
	Now we take the limit $N \rightarrow \infty$ to arrive at the MFE of the game. Let $\bar{Z}_t$ and $\bar{\zeta}^0_t$ denote the global MF and adversary co-state at time $t$ respectively, such that
	\begin{align*}
	\bar{Z}_t = \lim_{N \rightarrow \infty} \sum_{i=1}^{N} \frac{Z^i_t}{N}, \hspace{0.2cm} \bar{\zeta}^0_t = \lim_{N \rightarrow \infty} \sum_{i=1}^{N} \zeta^{i,0}_t. 
	\end{align*}
	Notice that the first equation assumes that the tracking signal $\bar{Z}_t$ is also the aggregate state trajectory. We can write down the dynamics of these quantities as follows,
	\begin{align} \label{eq:adv_eq}
	\bar{Z}_{t+1} = & A \bar Z_t +B \bar U_t + C V_t, \nonumber \\
	V_t = & - S^{-1}_t  C^T  \bar{\zeta}^0_{t+1}, 
	\hspace{0.2cm} \bar{\zeta}^0_t =  A^T  \bar{\zeta}^0_{t+1} - Q_t \bar{Z}_t,  
	\end{align}
	Hence, we arrive at the adversary's equilibrium control policy.
\end{proof}

\begin{proof}[Proof of Theorem \ref{thm:eq_glo_MF}]
	To characterize the equilibrium global MF we start by substituting the equilibrium control of the generic agent from Theorem \ref{thm:agent_cont} into the dynamics of the generic agent:
	\begin{align*}
	Z_{t+1} & = A Z_t - B U_t + C V_t + W_t, \nonumber \\
	U^*_t & = - R^{-1}_t B^T \zeta_{t+1}  
	\end{align*}
	where $\zeta_t$ and $M^\zeta_t$ are defined in Theorem \ref{thm:agent_cont}. By taking expectation and invoking the \emph{consistency conditions} $\bar{Z}^*_t = \EE [Z_t]$ and $\bar{Z}^*_t = \EE [Y^*_t]$, we get
	\begin{align} \label{eq:gen_eq}
	\bar{Z}^*_{t+1} & = A \bar{Z}^*_t + B \bar{U}^*_t + C V^*_t, 
	\hspace{0.1cm} \bar{U}^*_t = - R^{-1}_t B^T \bar{\zeta}_{t+1} \nonumber \\
	\bar{\zeta}_t & = A^T \bar{\zeta}_{t+1} + Q_t \bar{Z}^*_t, \hspace{0.2cm} \bar{\zeta}_T = Q_T \bar{Z}^*_T 
	\end{align}
	where $\bar{\zeta}_t = \EE [\zeta_t]$. Comparing equations \eqref{eq:adv_eq} and \eqref{eq:gen_eq} we can see that $\bar{\zeta}_t = - \bar{\zeta}^0_t$ so the equilibrium global MF and costate dynamics can be written down as
	\begin{align}
	\bar{Z}^*_{t+1} & = A \bar{Z}^*_t - (B R^{-1}_t B^T - C S^{-1}_t C^T) \bar{\zeta}_{t+1} , \label{eq:gmf_fwd} \\
	\bar{\zeta}_t & = A^T \bar{\zeta}_{t+1} + Q_t \bar{Z}^*_t, \hspace{0.2cm} \bar{\zeta}_T = Q_T \bar{Z}^*_T  \label{eq:gmf_bck}
	\end{align}
	Let us assume the form of costate as $\bar{\zeta}_t = \bar{P}_t \bar{Z}^*_t + \bar{s}_t$. Substituting into \eqref{eq:gmf_fwd}, we get
	\begin{align}
	& \bar{Z}^*_{t+1} \nonumber  \\
	& = A \bar{Z}^*_t - (B R^{-1}_t B^T - C S^{-1}_t C^T) (\bar{P}_{t+1} \bar{Z}^*_{t+1} + \bar{s}_{t+1}), \nonumber \\
	& = E^{-1}_t ( A \bar{Z}^*_t - (B R^{-1}_t B^T - C S^{-1}_t C^T) \bar{s}_{t+1} ) \label{eq:gmf_inter}
	\end{align}
	where $E_t = (I + B R^{-1}_t B^T \bar{P}_{t+1} - C S^{-1}_t C^T \bar{P}_{t+1}) $. Now substituting $\bar{\zeta}_t = \bar{P}_t \bar{Z}^*_t + \bar{s}_t$ into \eqref{eq:gmf_bck}, we get
	\begin{align*}
	\bar{P}_t \bar{Z}^*_t + \bar{s}_t & = A^T (\bar{P}_{t+1} \bar{Z}^*_{t+1} + \bar{s}_{t+1}) + Q_t \bar{Z}^*_t, \\
	\bar{P}_T \bar{Z}^*_T + \bar s_T & = Q_T \bar{Z}^*_T \nonumber
	\end{align*}
	Substituting \eqref{eq:gmf_inter} into this equation, we arrive at
	\begin{align} \label{eq:gmf_covar}
	& \bar{P}_t \bar{Z}^*_t + \bar{s}_t = A^T (\bar{P}_{t+1} E^{-1}_t ( A \bar{Z}^*_t  \\
	& \hspace{0.5cm} - (B R^{-1}_t B^T - C S^{-1}_t C^T) \bar{s}_{t+1} ) + \bar{s}_{t+1}) + Q_t \bar{Z}^*_t \nonumber
	\end{align}
	Comparing coefficients of $\bar{Z}^*_t$ we obtain the Riccati eq.:
	\begin{align*}
	\bar{P}_t = A^T \bar{P}_{t+1} E^{-1}_t A + Q_t, \hspace{0.2cm} P_T = Q_T
	\end{align*}
	The variable $\bar{s}_t$ can be recursively computed from \eqref{eq:gmf_covar}:
	\begin{align}
	\bar{s}_t & = - A^T (\bar{P} E^{-1}_t (B R^{-1}_t B^T - C S^{-1}_t C^T) + I )\bar{s}_{t+1}, \nonumber
	\end{align}
	and $\bar{s}_T = 0$. Solving these equations we get $\bar{s}_t \equiv 0$. Hence, from \eqref{eq:gmf_inter} we can deduce that the equilibrium global MF follows linear dynamics:
	\begin{align} 
	\bar{Z}^*_{t+1} = E^{-1}_t A \bar{Z}^*_t = \bar{F}_t \bar{Z}^*_t,
	\end{align}
	where $E_t = (I + B R^{-1}_t B^T \bar P_{t+1} - C S^{-1}_t C^T \bar P_{t+1})$. Furthermore the co-state $\bar{\zeta}_t$ has the form $\bar{\zeta}_t = \bar{P}_t \bar{Z}^*_t$. Substituting this into the equilibrium adversarial control \eqref{eq:adv_cont} and observing that $\bar{\zeta}_t = - \bar{\zeta}^0_t$ we get
	\begin{align} 
	V^*_t = S^{-1}_t C^T \bar{\zeta}_{t+1} = S^{-1}_t C^T \bar{P}_t \bar{Z}^*_{t+1} = S^{-1}_t C^T \bar{P}_t \bar{F}_t \bar{Z}^*_t .
	\end{align}
	Hence we have shown that the equilibrium adversarial policy is linear in the equilibrium global MF, which has deterministic linear dynamics.
\end{proof}

\begin{proof}[Proof of Theorem \ref{thm:eq_loc_MF}]
	Let us define $\tilde \zeta_t = \sum_{j \in \Ns} \zeta^j_t/\lvert \Ns \rvert$, where $\zeta^j_t$ is the co-state of the $j$th agent. We can write the dynamics of these quantities as follows:
	\begin{align}
	Y^*_{t+1} & = A  Y^*_t - (B R^{-1}_t B^T - C S^{-1}_t C^T) \tilde \zeta_{t+1} + \tilde W_t,  \label{eq:lmf_dyn}\\
	\tilde \zeta_t & = A^T \tilde \zeta_{t+1} + (Q_t + \bar{Q}_t)  Y^*_t - \bar{Q}_t \bar{Z}^*_t - \tilde M^\zeta_t  \label{eq:lmf_cos}, \\
	\tilde \zeta_T & = (Q_T + \bar{Q}_T)  Y^*_T - \bar{Q}_T \bar{Z}^*_T, \nonumber
	\end{align}
	where $\tilde W_t = \sum_{j \in \Ns} W^j_t/\lvert \Ns \rvert$ and $\tilde M^\zeta_t$ is a Martingale difference sequence,
	$\tilde M^\zeta_t = A^T \tilde \zeta_{t+1} - A^T \EE \big[\tilde \zeta_{t+1} \mid \Fs^{ Y}_t \big]$. 
	Now we assume the form of the co-state $\tilde \zeta_t = \tilde P_t Y^*_t + \tilde s_t$. Substituting into the dynamics of the equilibrium local MF, we get
	\begin{align} \label{eq:lmf_dyn_inter}
	& Y^*_{t+1}    \\
	& = A  Y^*_t - (B R^{-1}_t B^T - C S^{-1}_t C^T) (\tilde P_{t+1} Y^*_{t+1} + \tilde s_{t+1}) + \tilde W_t \nonumber \\
	& = \tilde E^{-1}_t ( A  Y^*_t   - (B R^{-1}_t B^T - C S^{-1}_t C^T) \tilde s_{t+1} + \tilde W_t ) \nonumber
	\end{align}
	where $\tilde E_t = (I + B R^{-1} B^T \tilde P_{t+1} - C S^{-1}_t C^T \tilde P_{t+1})$. Now we substitute $\tilde \zeta_t = \tilde P_t Y^*_t + \tilde s_t$ into \eqref{eq:lmf_cos} to arrive at
	\begin{align}
	&\tilde P_t Y^*_t + \tilde s_t  = A^T \tilde P_{t+1} \EE [ Y^*_{t+1} \mid \Fs^Y_t] + A^T \tilde s_{t+1} \label{eq:lmf_inter} \\
	& \hspace{0.8cm} + (Q_t + \bar{Q}_t) Y^*_t - \bar{Q}_t  \bar{Z}^*_t \nonumber\\
	& = A^T \tilde P_{t+1} (\tilde E^{-1}_t ( A  Y^*_t   - (B R^{-1}_t B^T - C S^{-1}_t C^T) \tilde s_{t+1})  \nonumber \\
	& \hspace{0.8cm} + A^T \tilde s_{t+1} + (Q_t + \bar{Q}_t) Y^*_t - \bar{Q}_t \bar{Z}^*_t \nonumber, \\
	&\tilde P_T Y^*_T + \tilde s_T = (Q_T + \bar{Q}_T) Y^*_T - \bar{Q}_T  \bar{Z}^*_T \nonumber
	\end{align}
	Comparing coefficients of $Y^*_t$, we arrive at the Riccati equation:
	\begin{align*}
	\tilde P_t = A^T \tilde P_{t+1}\tilde E^{-1}_t A + Q_t + \bar{Q}_t, \hspace{0.2cm} \tilde P_T = Q_T + \bar{Q}_T
	\end{align*}
	From \eqref{eq:lmf_inter} we can compute the process $\tilde s_t$ in a backwards-in-time manner,
	\begin{align*}
	\tilde s_t = \tilde H_t \tilde s_{t+1} - \bar{Q}_t \bar{Z}^*_t, \hspace{0.2cm} \tilde s_T = - \bar{Q}_T \bar{Z}^*_T,
	\end{align*}
	where $\tilde H = A^T (I - \tilde P_{t+1} \tilde E^{-1}_t B R^{-1} B^T)$. Hence, the process $\tilde s_t$ can be computed as
	\begin{align*}
	\tilde s_t & = -\sum_{i=0}^{T-t} \prod_{j=1}^{i} \tilde H_{t+j} \bar{Q}_{t+i} \bar{Z}_{t+i} \nonumber \\
	& = -\sum_{i=0}^{T-t} \prod_{j=1}^{i} \tilde H_{t+j} \bar{Q}_{t+i}  \bar{F}_{t+j} \bar{Z}_t.
	\end{align*}
	The last equality is due to the fact that equilibrium global MF follows linear dynamics \eqref{eq:gmf_lin}. Substituting this expression into \eqref{eq:lmf_dyn_inter}, we get
	\begin{align}
	Y^*_{t+1} = \tilde F^1_t  Y^*_t   + \tilde F^2_t \bar{Z}^*_t + \tilde  E^{-1}_t \tilde W_t 
	\end{align}
	where 
	\begin{align*}
	\tilde F^1_t = & \tilde E^{-1} A, \\
	\tilde F^2_t = & \tilde E^{-1}_t (B R^{-1}_t B^T - C S^{-1}_t C^T))  \sum_{i=0}^{T-t} \prod_{j=1}^{i} \tilde H_{t+j} \bar{Q}_{t+i}  \bar{F}_{t+j} \nonumber
	\end{align*}
	Hence the equilibrium local MF has linear Gaussian dynamics driven by the equilibrium global MF.
\end{proof}

\end{document}